\newtheorem{theorem}{Theorem}[section]
\newtheorem{proposition}[theorem]{Proposition}
\newenvironment{proof}[1][Proof]{\begin{trivlist}
\item[\hskip \labelsep {\bfseries #1}]}{\end{trivlist}}
\newenvironment{remark}[1][Remark]{\begin{trivlist}
\item[\hskip \labelsep {\bfseries #1}]}{\end{trivlist}}
\newcommand{\qed}{\nobreak \ifvmode \relax \else
      \ifdim\lastskip<1.5em \hskip-\lastskip
      \hskip1.5em plus0em minus0.5em \fi \nobreak
      \vrule height0.75em width0.5em depth0.25em\fi}
\newcommand{\captionfonts}{\normalsize}
\long\def\@makecaption#1#2{%
  \vskip\abovecaptionskip
  \sbox\@tempboxa{{\captionfonts #1: #2}}%
  \ifdim \wd\@tempboxa >\hsize
    {\captionfonts #1: #2\par}
  \else
    \hbox to\hsize{\hfil\box\@tempboxa\hfil}%
  \fi
  \vskip\belowcaptionskip}
\begin{document}
\hspace{13.9cm}1

\ \vspace{20mm}\\

{\LARGE The geometry of Tempotronlike problems}

\ \\
{\bf \large Konrad P. Kording$^{\displaystyle 1, \displaystyle 2}$}\\
{$^{\displaystyle 1}$Rehabilitation Institute of Chicago}\\
{$^{\displaystyle 2}$Northwestern University, 345 E Superior Street, Chicago, IL 60611}\\

%\ \\[-2mm]
{\bf Keywords:} Tempotron, Computational geometry, Polytope coverage problem

\thispagestyle{empty}
\markboth{}{NC instructions}
\ \vspace{-0mm}\\
%
%Abstract
\begin{center} {\bf Abstract} \end{center}
In the discrete Tempotron learning problem a neuron receives time varying inputs and for a set of such input sequences ($\mathcal S_-$ set) the neuron must be sub-threshold for all times while for some other sequences ($\mathcal S_+$ set) the neuron must be super threshold for at least one time.  Here we present a graphical treatment of a slight reformulation of the tempotron problem. We show that the problem's general form is equivalent to the question if a polytope, specified by a set of inequalities, is contained in the union of a set of equally defined polytopes. Using recent results from computational geometry, we show that the problem is W[1]-hard. This phrasing gives some new insights into the nature of gradient based learning algorithms. A sampling based approach can, under certain circumstances provide an approximation in polynomial time. Other problems, related to hierarchical neural networks may share some topological structure.

%%%%%%%%%%%

\section{Introduction}
The tempotron (G\"utig and Sompolinsky 2006) problem occurs when a given neuron is supposed to learn to respond to time-varying signals. Let $\bf I(t)\in \mathbb{R}^N$ be a time varying input vector where $t\in \mathbb{N}^+$ runs from $1$ to $T$. 
The neurons output $V(t)\in \Re$ depends on its inputs according to:
\begin{equation}
V(t)={\bf W}^{\intercal}\sum_{\Delta t}{\bf I}(t-\Delta t)K(t-\Delta t)+V_{rest}
\end{equation}
Here $\bf W \in \mathbb{R}^N$ is a set of weights characterizing the neuron's input/output relationship. 
$K(t-\Delta t)$ is the Kernel characterizing the relation of the current potential on the history of past inputs. Keep in mind that this kernel contains both the temporal properties of the neuron $\Delta t \in \mathbb{N}$ runs from $0$ to $t-1$, over the time passed since the start of the stimulation. 
 
In the specific example in the original paper the chosen Kernel had the form$K(t-t_i)=V_0(\exp[-(t-t_i)/\tau]-\exp[-(t-t_i)/\tau_s])$ with free parameters $\tau$ and $\tau_s$ that characterize the decay time constants of membrane integration and synaptic currents, respectively. All inputs $\bf I$ were constrained to be binary.

The neuron is considered to "spike" in response to a time varying input stimulus $\exists t \in \mathbb{N}^+$ for which $V(t)>V_{thresh}$. There are two sets relevant for learning, each consisting of input matrices  ${\bf I}:=({\bf I}(1),{\bf I}(2),\cdots, {\bf I}(T)) \in \mathbb{R}^{N \times T}$. There is the set $\mathcal S_{-}$, of $J$ time varying inputs for which there must be no spike. There also is the set $\mathcal S_{+}$ of $K$ inputs for which there must be at least one spike. The Tempotron learning problem asks for a $\bf W$ to be given that satisfies the constraints given by the sets $\mathcal S_{+}$ and $\mathcal S_{-}$. The corresponding decision problem asks if a solution exists.

We now define the reformulated input ${\bf J}(t)$:
\begin{equation}
{\bf J}(t)= \sum_{\Delta t}{\bf I}(t-\Delta t)K(t-\Delta t)
\end{equation}
and obtain as a simplified criterion for spiking:

\begin{equation}
{\bf W}^{\intercal}{\bf J}(t)>V_{thresh}-V_{rest}
\label{linearConstraint}
\end{equation}
\section{Prior Work}
In a broad set of analyses using simulations and mathematical analysis the Tempotron has been analyzed. For example, it was asked  (Ram, Monasson and Sompolinski, 2010) what the capacity of the Tempotron was.

\section{Reformulating the problem}
In this paper we will now slightly rewrite the original problem to make it amenable to a graphical treatment. First, we will assume that for all $i \in (1,2,\cdots, N)$ the individual weight  $W_i$ is upper and lower bounded by $W_{min} \in \Re$ and $W_{max} \in \Re$, respectively. This is primarily to allow us to work with bounded as opposed to unbounded polytopes and to give them a defined volume. Second we will assume that the input vectors $\bf J$ can not only reach all values compatible with spiking inputs $\bf I$ and a given set of Kernels but that arbitrary such inputs are possible. This seems like a rather good approximation to the real problem and assumes that continuous inputs $\bf J$ can be approximated by the right choice spikes along with the right Kernels. Lastly, we assume without loss of generality that $V_{thresh}-V_{rest}=1$. We call this continuous version of the Tempotron {\it C-Tempotron}.

\begin{remark}
It seems very likely to the author that the Tempotron and the C-Tempotron are actually problems of identical hardness but this would have to be the topic of a more careful analysis.
\end{remark}

\section{Geometrical version of the tempotron problem}
 Here we will consider the geometry of space of weights $\mathcal X \subset \mathbb{R}^N$ that are compatible with $\mathcal S_{-}$ and call that inclusion set $\mathcal X_{inc}\subset \mathbb{R}^N$. We will also consider the space of all possible $\mathcal X \subset \mathbb{R}^N$ that is {\it incompatible} with $\mathcal S_{+}$ and call that exclusion set $\mathcal X_{exc}\subset \mathbb{R}^N$. We formalize the necessary and sufficient conditions for the existence of a solution to the Tempotron problem in the following proposition.
 
\begin{proposition}
\emph{The C-Tempotron problem is equivalent to the problem of answering if a polytope in $\mathbb{R}^N$ is in its entirety covered by a union of other polytopes in $\mathbb{R}^N$}.
\end{proposition}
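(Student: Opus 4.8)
The plan is to unpack the two sets $\mathcal X_{inc}$ and $\mathcal X_{exc}$ defined just above the proposition and show each is a polytope or a finite union of polytopes, and then show the C-Tempotron problem has a solution if and only if $\mathcal X_{inc} \not\subseteq \mathcal X_{exc}$, equivalently if and only if a certain polytope is \emph{not} contained in a union of others. I would first recall the linear spiking criterion \eqref{linearConstraint}: for a fixed input matrix $\mathbf J$ with columns $\mathbf J(1),\dots,\mathbf J(T)$, the neuron is sub-threshold for all $t$ precisely when $\mathbf W^{\intercal}\mathbf J(t)\le 1$ for every $t=1,\dots,T$. Together with the box constraints $W_{min}\le W_i\le W_{max}$ assumed in the C-Tempotron reformulation, this is a finite conjunction of linear inequalities in $\mathbf W$, hence defines a bounded polytope in $\mathbb R^N$.

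**The two ingredients.**

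First I would treat $\mathcal S_-$. For each of the $J$ negative examples $\mathbf I^{(j)}\in\mathcal S_-$ with reformulated input $\mathbf J^{(j)}$, the constraint ``no spike'' contributes the $T$ half-spaces $\{\mathbf W:\mathbf W^{\intercal}\mathbf J^{(j)}(t)\le 1\}$. Intersecting all of these over $j=1,\dots,J$ and $t=1,\dots,T$, together with the box, gives the inclusion set $\mathcal X_{inc}$, which is a single bounded polytope described by $JT+2N$ inequalities. Second I would treat $\mathcal S_+$: a weight vector $\mathbf W$ is \emph{incompatible} with a single positive example $\mathbf I^{(k)}$ exactly when it fails to spike on it, i.e.\ when $\mathbf W^{\intercal}\mathbf J^{(k)}(t)\le 1$ for \emph{all} $t$ — again a polytope, call it $\mathcal P_k$ (intersected with the box so it is bounded and of the same defining form). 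A solution to the C-Tempotron must be compatible with \emph{every} positive example, so the set of weights incompatible with \emph{at least one} positive example is $\mathcal X_{exc}=\bigcup_{k=1}^{K}\mathcal P_k$, a union of $K$ bounded polytopes.

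**Closing the equivalence.**

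Then the logical step is immediate: $\mathbf W$ solves the C-Tempotron iff $\mathbf W\in\mathcal X_{inc}$ (it respects all negative examples and the box) and $\mathbf W\notin\mathcal P_k$ for every $k$ (it spikes on each positive example), i.e.\ iff $\mathbf W\in\mathcal X_{inc}\setminus\mathcal X_{exc}$. Hence a solution exists iff $\mathcal X_{inc}\not\subseteq\bigcup_{k}\mathcal P_k$, which is exactly the negation of ``the polytope $\mathcal X_{inc}$ is covered in its entirety by a union of the polytopes $\mathcal P_k$.'' Since a decision problem and its complement are equivalent under the reduction we care about, this establishes the stated equivalence. I would also note briefly that all the polytopes in play are ``equally defined'' in the sense the proposition intends: each is the intersection of the common box $[W_{min},W_{max}]^N$ with finitely many half-spaces of the form $\{\mathbf W^{\intercal}\mathbf v\le 1\}$, so the reduction goes both ways — any instance of the polytope-coverage question can conversely be read back as a C-Tempotron instance.

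**Anticipated obstacle.**

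The routine part is the dictionary between inequalities and half-spaces; the one point that needs a little care is the direction ``coverage question $\Rightarrow$ C-Tempotron instance,'' i.e.\ showing the reduction is genuinely an equivalence and not merely a one-way reduction. This requires checking that an arbitrary bounded polytope (and an arbitrary finite union of bounded polytopes) can be realized, up to the normalization $V_{thresh}-V_{rest}=1$ and the box constraint, by suitable choices of the vectors $\mathbf J(t)$ — in particular that inequalities of the form $\mathbf W^{\intercal}\mathbf v\le 1$ with the common right-hand side $1$ suffice to describe any bounded polytope whose interior meets the relevant region, and that the box $[W_{min},W_{max}]^N$ can be chosen large enough to be harmless. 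Handling the degenerate cases (a polytope that is empty, or one not in ``general position'' relative to the threshold hyperplanes) is the only place where the argument is more than bookkeeping.
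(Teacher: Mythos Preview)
Your forward direction is exactly the paper's argument: $\mathcal X_{inc}$ is a single polytope cut out by the $JT$ ``no-spike'' half-spaces together with the box, each $\mathcal X_{exc,k}$ is a polytope cut out by the $T$ half-spaces for the $k$th positive example, and a solution exists iff $\mathcal X_{inc}\not\subseteq\bigcup_k \mathcal X_{exc,k}$. That part is fine and matches the paper essentially line for line.

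The gap is in the converse, precisely where you flag an ``anticipated obstacle.'' You suggest that inequalities of the form $\mathbf W^{\intercal}\mathbf v\le 1$ with common right-hand side $1$ should suffice, possibly after choosing the box large enough. That does not work in general: a facet inequality $\mathbf W^{\intercal}\mathbf v\le b$ with $b\le 0$ cannot be rescaled to right-hand side $1$ without reversing the inequality or dividing by zero, and there is no single translation of $\mathbb R^N$ that puts the origin in the interior of \emph{all} the polytopes simultaneously (the covering polytopes may be pairwise disjoint). So the normalization idea alone does not give you an arbitrary polytope-coverage instance.

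The paper's fix is a concrete construction you are missing: lift to $\mathbb R^{N+1}$, append the desired bias $b$ as the $(N{+}1)$st coordinate of each input vector, and then \emph{clamp} $W_{N+1}=1$ by adding one extra stimulus to $\mathcal S_{+}$ and one to $\mathcal S_{-}$, each supported only on coordinate $N{+}1$ and $\epsilon$-close to each other. This forces $W_{N+1}$ into an arbitrarily thin slab around $1$, so $\mathbf W^{\intercal}\mathbf v\le 1$ in the lifted problem encodes $\mathbf W^{\intercal}\mathbf v_{1:N}\le 1-b\cdot W_{N+1}\approx 1-b$, i.e.\ an arbitrary affine inequality. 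With that trick the reduction is genuinely two-way; without it your converse is only a hope, not an argument.
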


However to prove this proposition we first need to consider the nature of both inclusion and exclusion sets.
 \begin{figure}
\begin{center}
\includegraphics[width=0.5\textwidth]{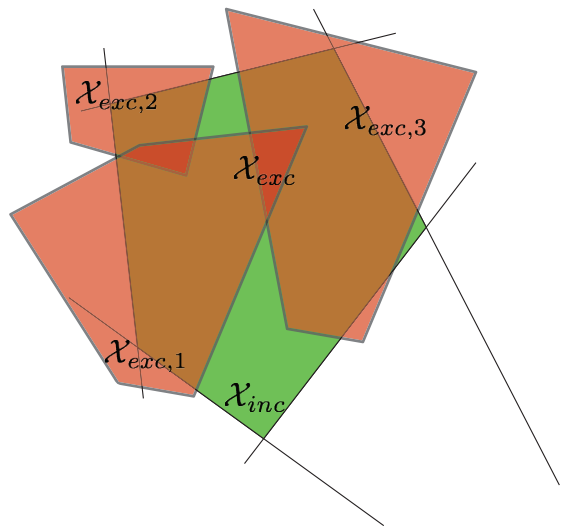}
\end{center}
\caption{Geometrical interpretation: Here we see an example of the geometrical view on the problem. Green is the inclusion set. Red is the exclusion set. This exclusion set is the union of its constituent exclusion sets. The solution set is the part of the inclusion set not covered by the exclusion set, the part displayed as green}
 \label{ModelKinds}
\end{figure}

\subsection{The inclusion set} 
We will first discuss the nature of the inclusion set $\mathcal X_{inc}$ that is defined by the trials in $\mathcal S_{-}$.
A trial that is in the $\mathcal S_{-}$ set implies that for all times $t$ from $1$ to $T$ the inequality ${\bf W}^{\intercal}{\bf I}_{t}<1$ (no spike) is obeyed. Then we obtain $\mathcal X_{inc}$ by considering the inequalities for each stimulus $j$ and each time point $t$:
${\bf W}^{\intercal}{\bf I}_{j,t}<1$. 
As this set is defined by $J \times T$ linear inequalities, along with the $W_{min}<W_{i}<W_{max}$ inequalities the set $\mathcal X_{inc} \subseteq \mathbb{R}^N$ defines a convex polytope.

\subsection{The exclusion set}
Now we will discuss the nature of the exclusion set $\mathcal X_{exc}$ that is defined by the trials in $\mathcal S_{+}$.
A trial that is in the $\mathcal S_{+}$ set implies that $\exists t$ so that ${\bf W}^{\intercal}I_{t}\geq 1$ (spike) is obeyed. The solution set for a given $j$ now is the union of subspaces defined by ${\bf W}^{\intercal}I_{jt}\geq 1$. The inverse of this, the set of $\bf W$ that are $\bf not$ solutions ($\mathcal X_{exc, j}$) now is a polytope again defined by the $T$ inequalities:
 ${\bf W}^{\intercal}I_{j,t}<1$.
 For each stimulus $j \in \mathcal S_{+}$ these $K+2N$ inequalities, along with the bounds on $\bf W$ thus define a polytope  $\mathcal X_{exc,j} \subseteq \mathbb{R}^N$.
The set of all solutions excluded by $\mathcal S_{+}$ now is the union of the polytopes defined by each stimulus:
\begin{equation}
\mathcal X_{exc}=\bigcup_{j=1}^{J} \mathcal X_{exc,j }
\end{equation}
It is important to note here that the exclusion set, being a union of polytopes, will generally not be convex.

\subsection{The solution set}

A solution to the C-Tempotron problem exists {\it iff} the inclusion set is not in its entirely covered by the exclusion sets:
 \begin{equation}
 \mathcal X_{inc} \subseteq \bigcup_{k=1}^K \mathcal X_{exc,j }=\mathcal X_{exc}
 \end{equation}
 and each of the $\mathcal X$ are polytopes.
 As this problem is exactly one way of specifying the polytope coverage problem we have proven the proposition.

\subsection{True equivalence}
Per construction, each tempotron problem has an exactly analogous polytope coverage problem. But is it the case that each polytope coverage problem has an equivalent C-tempotron problem? The reader will probably have observed that the definitions are exactly equivalent apart from the fact that in the standard polytope definition each inequality has a potentially distinct bias. But it turns out that this bias can be built into the Tempotron through a slightly altered construction principle. We will first augment each input vector $\bf I$ with the desired bias (say at its end). We will next set the corresponding weight to be 1 using the following procedure. We construct one stimulus that we append to $\mathcal S_{+}$ and one stimulus to $\mathcal S_{-}$ which are both vectors only along the dimension $N+1$ and are $\epsilon$ away from one another, effectively clamping $W_{N+1}$ to 1. We now have a construction to assign an equivalent Tempotron problem to each polytope coverage problem and vice versa.
 
\section{Implications from the geometric framework}
\begin{proposition} 
\emph{The C-Tempotron problem is at least W[1] hard}
\end{proposition}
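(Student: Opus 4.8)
The plan is to establish W[1]-hardness by transporting a known parameterized-hardness result from computational geometry across the equivalence proved above. Recent work on the parameterized complexity of geometric covering shows that the \emph{polytope coverage problem} — given a polytope $P$ and polytopes $Q_1,\dots,Q_m$ in $\mathbb{R}^d$, each presented by an explicit system of linear inequalities, decide whether some point of $P$ escapes every $Q_i$ (equivalently, whether $P\not\subseteq\bigcup_i Q_i$) — is W[1]-hard when parameterized by the ambient dimension $d$. That source hardness is itself obtained by a parameterized reduction from a canonical W[1]-complete problem (for instance $k$-Clique, Independent Set, or a grid-tiling formulation), in which the $k$ independent ``choices'' of the source instance are encoded in $O(k)$ coordinates and an escaping point records a consistent global selection. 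I would take this as the starting point and reduce the polytope coverage problem to C-Tempotron.

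Given an instance $(P;Q_1,\dots,Q_m)$ in $\mathbb{R}^d$ with parameter $d$, I would apply the bias-absorbing construction of the ``True equivalence'' subsection: augment every input vector with one extra coordinate carrying the per-inequality bias, and append one stimulus to $\mathcal S_+$ and one to $\mathcal S_-$ differing by $\epsilon$ along the new axis so as to clamp the corresponding weight to (a tiny interval around) $1$. Each bounding half-space of $P$ is then realised by a single time-point inequality of one $\mathcal S_-$ stimulus; each $Q_i$ is realised by one $\mathcal S_+$ stimulus whose time-point inequalities are the facets of $Q_i$ (padding with redundant inequalities to a common horizon $T$); the box $W_{\min}<W_i<W_{\max}$ is taken wide enough to contain $P$ together with all parts of the $Q_i$ that matter for coverage; and $V_{\mathrm{thresh}}-V_{\mathrm{rest}}$ is normalised to $1$. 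By the proposition of the previous section, the resulting C-Tempotron instance, which lives in dimension $N=d+1$, admits a solution \emph{iff} $P\not\subseteq\bigcup_i Q_i$. All coefficients are produced in polynomial time and the new parameter $N=d+1$ depends only on the old parameter $d$, so the map is an FPT reduction that preserves the parameter, and W[1]-hardness transfers.

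The part that needs the most care is the bookkeeping that makes this transfer legitimate. The delicate points are: (i) ensuring the clamp genuinely pins $W_{N+1}$ to an interval around $1$ narrow enough that it cannot change the coverage answer, which may require a uniform rescaling of all coefficients so that ``$\epsilon$ apart'' becomes a strong enough constraint relative to the bit-complexity of the input; (ii) choosing the weight box large enough not to destroy any potential escaping point while still keeping every set involved a bounded polytope with well-defined volume, as the C-Tempotron definition demands; and (iii) fixing the direction of the equivalence — C-Tempotron asks for the \emph{existence} of a separating weight vector, which matches the ``some point of $P$ escapes all $Q_i$'' phrasing, so no complementation is needed provided the cited result is stated in that existential form (otherwise one invokes the corresponding statement for the complement). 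I expect (iii) to be the main conceptual obstacle, since it forces one to commit precisely to the parameterization (by $N$) and to the exact form of the imported computational-geometry theorem; once those are fixed, the remaining verifications are routine.
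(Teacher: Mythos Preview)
Your approach is essentially the same as the paper's: import a known W[1]-hardness result for a coverage problem and transport it through the equivalence of the preceding section. The paper's proof is far terser and more specific---it simply notes that one can realise the axis-parallel box case as a C-Tempotron and cites Chan (2008), Observation~5.2, for W[1]-hardness of that box-coverage problem---so your careful bookkeeping about the bias clamp, the weight box, and parameter preservation is correct in spirit but goes well beyond what the paper actually writes down.
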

\begin{proof}
We can construct a C-Tempotron such that each of the polytopes is parallel to the axes. For this scenario Chan(2008) in his Observation 5.2 has shown the problem to be NP hard, or more specifically W[1] hard. There are many other famous problems of complexity W[1] including problems about graphs and their cliques and the acceptance of certain Turing machines.
\end{proof}

\subsection{Insights into the gradient based method}
The learning algorithm introduced in the original Tempotron problem implements the following graphical operation. To satisfy the constraints of $\mathcal S_{-}$ we take all violated constraints and try to move away from them. To satisfy the constraints of $\mathcal S_{+}$ for each stimulus we take the one constraint that is closest to being obeyed and move into that direction. This can easily make a solution impossible to arrive to if there are extra forbidden areas in between the algorithms start position and the actual solution space. For example, if $N=1$ and the starting position is close to $W_{max}$ and we have a single excluded polytope that ranges from $W_{max}/2$ to $W_{max}$. Then the closest inequality to the starting point is always going to be $W_{max}$ despite there being no solution set in that directions and despite half of the available space actually being solutions.

\subsection{Sampling based approximate algorithms}
We can sample from a polytope in polynomial time (Kannan, Lov‡sz, Simonovits, 1997).  So one simple algorithm is for us to sample from  $\mathcal X_{inc}$. For each sample we can then in $O(T\times K \times N)$ determine if the sample is in $\mathcal X_{exc}$. We can thus produce each such answer in polynomial time. 

Let us now define the remaining volume of the solution space as the Lebesques volume $\lambda (\mathcal X_{inc} \backslash \mathcal X_{exc})$. Then if a sample exists we should each iteration find it with probability:
\begin{equation}
p=\frac{\lambda (\mathcal X_{inc} \backslash \mathcal X_{exc})}{\lambda (\mathcal X_{inc})}
\end{equation}
We are sampling randomly from the set of weights that are compatible with the $\mathcal S_{-}$ set and hope that they randomly obey the $\mathcal S_{+}$ set. Our procedure thus either produces a solution $\bf W$ to the problem, or it gives us a progressively lower estimate about the probability of the existence of a solution. If there exists a solution then our probability of finding it is $(1-p)^n$. 

It is thus clear that we can not obtain certainty. However, given an assumption of $p$ and an assumption of $\delta$, a false negative rate we are willing to tolerate, there will be a number of samples $n$ that allows us to solve the problem. 
By solving the inequality $(1-p)^n\leq \delta$.  We obtain that this number of samples should be:
\begin{equation}
n \geq \log (1/\delta)/p
\end{equation} 

So if we are willing to accept a fixed proportion of false positives then we can find the solution in $O(N^5\log (1/\delta)/p)$ steps.  However, this term will under some circumstances still be exponential with the dimensionality of the problem. For example, if each exclusion set $\mathcal X_{exc,j }$ covers a random fixed proportion of the remaining space then $p$ decreases exponentially with the dimensionality of the problem. It all depends on how Tempotron problems are actually constructed (or given by nature). 

\section{Discussion}
Here we have shown that the C-Tempotron problem and the polytope coverage problem are equivalent. We have shown that they are at least of W[1] complexity. Lastly, we have also shown how there may be approximations that are more efficient under certain assumptions.

Our problem is formulated in discrete time (see Rubin, Monasson, Sompolinski 2010) but there is a lot of interest in continuous time approaches. It is not entirely clear what would change in the continuous case. Representing the problem would become far harder. And simple approximation strategies, e.g. inner and outer bounding by finite polytopes, may well lead to fundamental changes in high dimensions. Understanding finale time Tempotrons sounds like an exciting but challenging endeavor. 

Our approach may shed some light on empirical findings about the tempotron. It was observed that under certain circumstances the solution space is fractured into many small domains, spread roughly evenly across the weight space. This could naturally relate to the fact that ex ante, there is no reason for the polytopes to cluster in any one region of space. This in term may explain why the proposed geometrical algorithm appears to converge quite well (e.g. because each starting point is close to a potential solution). 

The Tempotron bears some computational similarity with other problems occurring in the field of computation. If we have a system of binary elements with linear decision boundaries and an output function that is the {\it or} function then the problem is equivalent to the tempotron problem. Arguably, problems that are similar happen in many cases in the context of neural networks with max pooling.
 
\subsection*{Acknowledgements}
I am grateful to Sara Solla for making me write this up. To Robert Guetig for discussions. To Hans Raj Tiwary for helping me understand why the problem is hard. To Ran Rubin for helping me understand the Tempotron more deeply. And to Mohammad Azar for impressive help and discussions.

\section{References}
Chan, Timothy M. "A (slightly) faster algorithm for Klee's measure problem." Proceedings of the twenty-fourth annual symposium on Computational geometry. {\it ACM}, 2008.\newline

G\"utig, Robert, and Haim Sompolinsky. "The tempotron: a neuron that learns spike timingÐbased decisions." {\it Nature neuroscience} 9.3 (2006): 420-428.\newline

Kannan, Ravi, L\'aszl\'o Lov\'asz, and Mikl\'os Simonovits. "Random walks and an O*(n5) volume algorithm for convex bodies." {\it Random structures and algorithms} 11.1 (1997): 1-50.\newline

Rubin, Ran, R\'emi Monasson, and Haim Sompolinsky. "Theory of spike timing-based neural classifiers." {\it Physical review letters} 105.21 (2010): 218102.\newline

\end{document}